\newtheorem{theorem}{Theorem}
\newtheorem{coro}{Corollary}
\newtheorem{lemma}{Lemma}
\newtheorem{remark}{Remark}
\newtheorem{proposition}{Proposition}
\DeclareMathOperator{\diag}{diag}
\title{\LARGE \bf
Identifying the Dynamics of a System by Leveraging Data \\ from Similar Systems
}
\author{~Lei~Xin, Lintao Ye, George Chiu, Shreyas Sundaram 
\thanks{This research was supported by USDA grant 2018-67007-28439.  This work represents the opinions of the authors and not the USDA or NIFA. Lei Xin and Shreyas Sundaram are with the Elmore Family School of Electrical and Computer Engineering, Purdue University. George Chiu is with the School of Mechanical Engineering, Purdue University. E-mails: {\tt\{lxin, gchiu, sundara2\}@purdue.edu}. Lintao Ye is with the School of Artificial Intelligence and Automation, Huazhong University of Science and Techonology. E-mail: {\tt yelintao93@hust.edu.cn}.}
}
\begin{document}

\maketitle
\thispagestyle{empty}
\pagestyle{empty}

\begin{abstract}
We study the problem of identifying the dynamics of a linear system when one has access to samples generated by a similar (but not identical) system, in addition to data from the true system.  We use a weighted least squares approach and provide finite sample performance guarantees on the quality of the identified dynamics. Our results show that one can effectively use the auxiliary data generated by the similar system to reduce the estimation error due to the process noise, at the cost of adding a portion of error that is due to intrinsic differences in the models of the true and auxiliary systems. We also provide numerical experiments to validate our theoretical results. Our analysis can be applied to a variety of important settings. For example, if the system dynamics change at some point in time (e.g., due to a fault), how should one leverage data from the prior system in order to learn the dynamics of the new system?  As another example, if there is abundant data available from a simulated (but imperfect) model of the true system, how should one weight that data compared to the real data from the system?  Our analysis provides insights into the answers to these questions. 
\end{abstract}


\section{Introduction} \label{sec: introduction}
The problem of dynamical system identification has been an important topic in various fields including economics, control theory and reinforcement learning \cite{ljung1999system}. When modeling from first principles is not possible, one can attempt to learn a predictive model from observed data. While classical system identification techniques focused primarily on achieving asymptotic consistency \cite{bauer1999consistency,jansson1998consistency,knudsen2001consistency}, recent efforts have sought to characterize the number of samples needed to achieve a desired level of accuracy in the learned model. 

The existing literature on finite sample analysis of system identification can be typically divided into two categories: single trajectory-based and multiple trajectories-based. In the single trajectory setup, it is assumed that one has access to a single long record of system input/output data, e.g., \cite{simchowitz2018learning,oymak2019non,simchowitz2019learning,sarkar2019nonparametric,faradonbeh2018finite,sarkar2019near}. This approach enables system identification without restarting the experiment multiple times. In contrast, for the multiple trajectories setup, it is typically assumed that one has access to multiple independent (short) trajectories of the system, e.g., \cite{dean2019sample, fattahi2018data, sun2020finite, zheng2020non, lxinmultiple}. Due to the assumption of independence over multiple trajectories, standard concentration inequalities usually apply in this case. In practice, one may obtain multiple trajectories of the system if one can restart the experiment, or if the data are generated from identical systems running in parallel.

We note that all of the above works assume that the data used for system identification are generated from the true system model that one wants to learn. However, in many cases, collecting abundant data from the true system can be costly or infeasible, and one may want to rely on data generated from similar systems. For example, for non-engineered systems like animals, one may only have a limited amount of data from the true animal one wants to model, due to the challenge of conducting experiments. On the other hand, it may be possible to collect data from other animals in the herd or from a reasonably good simulator, and one may want to leverage these available data. Furthermore, when a system changes its dynamics (e.g., due to failures), one needs to decide whether to discard all of the previous data, or to leverage the old information in an appropriate way. In settings such as the ones described above, it is of great interest to determine how one can leverage the data generated from systems that share similar (but not identical) dynamics. This idea is similar to the notion of {\it transfer learning} in machine learning, where one wants to transfer knowledge from similar tasks to a new task \cite{pan2009survey}. However, in contrast to system identification, most of the papers on transfer learning consider learning a static mapping from a feature space to a label space \cite{bastani2021predicting}.

In this paper, we study the system identification problem using both data generated from the true system that one wants to identify and data generated from an auxiliary (potentially time-varying) system that shares similar dynamics. Similarly to \cite{dean2019sample}, we consider the multiple trajectories setup. However, unlike \cite{dean2019sample}, we use all samples instead of only the last one from each trajectory in order to improve data efficiency. We use a weighted least squares approach and provide a finite sample upper bound on the estimation error. Our result shows that one can leverage the auxiliary data to reduce the error due to the noise, at the cost of adding a bias that depends on the difference between the true and auxiliary systems.  We also provide simulations to validate our theoretical results, and to provide insights into various settings (including the  example scenarios discussed above).

\section{Mathematical notation and terminology} \label{sec: notation and terminology}
Let $\mathbb{R}$ denote the set of real numbers. Let $\lambda_{min}(\cdot)$ and $\lambda_{max}(\cdot)$ be the smallest and largest eigenvalues, respectively, of a symmetric matrix. We use $*$ to denote the conjugate transpose of a given matrix. We use $\|\cdot\|$ and $\|\cdot\|_{F}$ to denote the spectral norm and Frobenius norm of a given matrix, respectively. Vectors are treated as column vectors. A Gaussian distributed random vector is denoted as $u\sim \mathcal{N}(\mu,\Sigma)$, where $\mu$ is the mean and $\Sigma$ is the covariance matrix. We use $I_{n}$ to denote the identity matrix with dimension $n \times n$.

\section{Problem formulation and algorithm} \label{sec: problem formulation}
Consider a discrete time linear time-invariant (LTI) system
\begin{equation}
\begin{aligned} 
\bar{x}_{k+1}=\bar{A}\bar{x}_{k}+\bar{B}\bar{u}_{k}+\bar{w}_{k}, \\
\end{aligned}
\label{eq:True system}
\end{equation}
where $\bar{x}_{k}\in \mathbb{R}^{n}$, $\bar{u}_{k}\in \mathbb{R}^{p}$, $\bar{w}_{k}\in \mathbb{R}^{n}$, are the state, input, and process noise, respectively, and $\bar{A}\in \mathbb{R}^{n \times n}$ and $\bar{B}\in \mathbb{R}^{n \times p}$ are system matrices. 
The input and process noise are assumed to be i.i.d Gaussian, with $\bar{u}_{k} \sim \mathcal{N}(0,\sigma_{\bar{u}}^{2}I_{p})$ and $\bar{w}_{k} \sim \mathcal{N}(0,\sigma_{\bar{w}}^{2}I_{n})$. Note that Gaussian inputs are commonly used in system identification, e.g., \cite{dean2019sample, zheng2020non}. We also assume that both the input $\bar{u}_{k}$ and state $\bar{x}_{k}$ can be perfectly measured.

Suppose that we have access to $N_{r}$ independent experiments of system \eqref{eq:True system}, in which the system restarts from an initial state $\bar{x}_{0} \sim \mathcal{N}(0,\sigma_{\bar{x}}^{2}I_{n})$, and each experiment is of length $T$. We call the state-input pairs collected from each experiment a {\it rollout}, and denote the set of samples we have as $\{(\bar{x}^{i}_{k},\bar{u}^{i}_{k}):1 \leq i \leq N_{r},0 \leq k \leq T\}$, where the superscript denotes the rollout index and the subscript denotes the time index.

Define the matrices 
\begin{equation} \label{G,F}
\begin{aligned}
&\bar{G}_{k}= \begin{bmatrix} \bar{A}^{k-2}\bar{B}& \bar{A}^{k-3}\bar{B}&\cdots& \bar{B}\end{bmatrix}\in \mathbb{R}^{n\times (k-1)p},\\
&\bar{F}_{k}=\begin{bmatrix} \bar{A}^{k-2}& \bar{A}^{k-3}&\cdots& I_{n}\end{bmatrix} \in \mathbb{R}^{n\times(k-1)n},
\end{aligned}
\end{equation}
for $k\geq2$, with $\bar{G}_{1}=0$ and $\bar{F}_{1}=0$. For $k\geq2$, we have
\begin{equation}
\begin{aligned}
\bar{x}^{i}_{k-1}=\bar{G}_{k}
\begin{bmatrix}
\bar{u}^{i}_{0}\\
\vdots\\
\bar{u}^{i}_{k-2}
\end{bmatrix}+
\bar{F}_{k}
\begin{bmatrix}
\bar{w}^{i}_{0}\\
\vdots\\
\bar{w}^{i}_{k-2}
\end{bmatrix}+
\bar{A}^{k-1}\bar{x}^{i}_{0}.
\end{aligned}
\end{equation}

Letting $\bar{z}^{i}_{k}=\begin{bmatrix} \bar{x}^{i*}_{k}& \bar{u}^{i*}_{k} \end{bmatrix}^{*}\in \mathbb{R}^{n+p}$ for $k\geq 0$, one can verify that $\bar{z}^{i}_{k}\sim \mathcal{N}(0,
\bar{\Sigma}_{k})$,
where
\begin{equation} \label{covariance}
\bar{\Sigma}_{k}=\left[\begin{smallmatrix}
\sigma_{\bar{u}}^{2}\bar{G}_{k+1}\bar{G}_{k+1}^{*}+\sigma_{\bar{w}}^{2}\bar{F}_{k+1}\bar{F}_{k+1}^{*}+\sigma_{\bar{x}}^{2}\bar{A}^{k}\bar{A}^{k*}&0\\
0&\sigma_{\bar{u}}^{2}I_{p}
\end{smallmatrix}\right].
\end{equation}
Note that we have $\lambda_{min}(\bar{\Sigma}_{k})> 0$ for all $k \geq 0$.

For each rollout $i$, define $\bar{X}^{i}=\begin{bmatrix}
\bar{x}^{i}_{T}&\cdots&\bar{x}^{i}_{1}
\end{bmatrix}  \in \mathbb{R}^{n\times T}$, $\bar{Z}^{i}=\begin{bmatrix}
\bar{z}^{i}_{T-1}&\cdots&\bar{z}^{i}_{0}
\end{bmatrix} \in \mathbb{R}^{(n+p)\times T}$, $\bar{W}^{i}=\begin{bmatrix}
\bar{w}^{i}_{T-1}&\cdots&\bar{w}^{i}_{0}
\end{bmatrix}  \in \mathbb{R}^{n\times T}$.
Further, define the batch matrices $\bar{X}=\begin{bmatrix}\bar{X}^{1}&\cdots&\bar{X}^{N_{r}}\end{bmatrix}\in \mathbb{R}^{n\times N_{r}T},\bar{Z}=\begin{bmatrix}\bar{Z}^{1}&\cdots&\bar{Z}^{N_{r}}\end{bmatrix}\in \mathbb{R}^{(n+p)\times N_{r}T},\bar{W}=\begin{bmatrix}\bar{W}^{1}&\cdots&\bar{W}^{N_{r}}\end{bmatrix}\in \mathbb{R}^{n\times N_{r}T}$. Denoting $\Theta \triangleq \begin{bmatrix}
\bar{A}&\bar{B}\end{bmatrix}\in \mathbb{R}^{n\times(n+p)}$, we have
\begin{equation*}
\begin{aligned}
&\bar{X}=\Theta\bar{Z}+\bar{W}.
\end{aligned}
\end{equation*}

In general, one would like to solve:
\begin{equation*} 
\begin{aligned}
    \mathop{\min}_{\tilde{\Theta}\in \mathbb{R}^{n\times (n+p)}} \|\bar{X}-\tilde{\Theta}\bar{Z}\|^{2}_{F},
\end{aligned}
\end{equation*}
and obtain an estimate $\Theta_{LS} \triangleq \begin{bmatrix}\bar{A}_{LS}& \bar{B}_{LS}\end{bmatrix}$, of which the analytical form is
\begin{equation*} 
\begin{aligned}
\Theta_{LS}=\bar{X}\bar{Z}^{*}(\bar{Z}\bar{Z}^{*})^{-1},
\end{aligned}
\end{equation*}
assuming invertibility of the matrix $\bar{Z}\bar{Z}^{*}$.
However, without enough samples (i.e., if $N_{r}$ is small, and there is no single long run record available), the obtained estimate could have large estimation error. In such cases, we can rely on samples generated from an auxiliary system, if it shares ``similar'' dynamics to system \eqref{eq:True system}. In particular, consider an auxiliary discrete time linear time-varying system
\begin{equation}
\begin{aligned} 
\hat{x}_{k+1}=\hat{A}_{k}\hat{x}_{k}+\hat{B}_{k}\hat{u}_{k}+\hat{w}_{k},\\
\end{aligned}
\label{eq:Perturbed system}
\end{equation}
where $\hat{x}_{k}\in \mathbb{R}^{n}$, $\hat{u}_{k}\in \mathbb{R}^{p}$, $\hat{w}_{k}\in \mathbb{R}^{n}$ are the state, input, and process noise, respectively, and $\hat{A}_{k}\in \mathbb{R}^{n \times n}$ and $\hat{B}_{k}\in \mathbb{R}^{n \times p}$ are system matrices. Again, the input and process noise are assumed to be i.i.d Gaussian, with $\hat{u}_{k} \sim \mathcal{N}(0,\sigma_{\hat{u}}^{2}I_{p})$ and $\hat{w}_{k} \sim \mathcal{N}(0,\sigma_{\hat{w}}^{2}I_{n})$. Note that an LTI system is a special case of the above system, with $\hat{A}_{k}=\hat{A}$ and  $\hat{B}_{k}=\hat{B}$ for all $k\ge0$.  
The dynamics of system \eqref{eq:Perturbed system} can be rewritten as
\begin{equation} 
\begin{aligned}
\hat{x}_{k+1}=(\bar{A}+\delta_{A_{k}})\hat{x}_{k}+(\bar{B}+\delta_{B_{k}})\hat{u}_{k}+\hat{w}_{k},\\
\end{aligned}
\end{equation}
where $\delta_{A_{k}}=\hat{A}_{k}-\bar{A},\delta_{B_{k}}=\hat{B}_{k}-\bar{B}$. Intuitively, the samples generated from the above system will be useful for identifying system \eqref{eq:True system} if $\|\delta_{A_{k}}\|,\|\delta_{B_{k}}\|$ are small for all $k$.

Now, suppose that we also have access to $N_{p}$ independent experiments of system \eqref{eq:Perturbed system}, in which the system restarts from an initial state $\hat{x}_{0}\sim \mathcal{N}(0,\sigma_{\hat{x}}^{2})$, and each experiment is of length $T$. Let $\{(\hat{x}^{i}_{k},\hat{u}^{i}_{k}):1 \leq i \leq N_{p},0 \leq k \leq T\}$ denote the samples from these experiments. Define $\Phi(k,l)=\hat{A}_{k-1}\hat{A}_{k-2}\cdots\hat{A}_{l}$ for $k>l$ with $\Phi(k,l)=I_{n}$ when $k=l$. Let $\hat{G}_{k} \in \mathbb{R}^{n\times (k-1)p}$ and $\hat{F}_{k}\in \mathbb{R}^{n\times (k-1)n}$ be
\begin{equation}
\begin{aligned}
&\hat{G}_{k}=\begin{bmatrix}\Phi(k-1,1)\hat{B}_{0}& \Phi(k-1,2)\hat{B}_{1}&\cdots& \hat{B}_{k-2} \end{bmatrix}, \\
&\hat{F}_{k}=\begin{bmatrix}\Phi(k-1,1)& \Phi(k-1,2)&\cdots& I_{n} \end{bmatrix}.
\end{aligned}
\end{equation}
for $k\geq2$, with $\hat{G}_{1}=0$ and $\hat{F}_{1}=0$. For $k\geq2$. we have
\begin{equation}
\begin{aligned}
\hat{x}^{i}_{k-1}=\hat{G}_{k}
\begin{bmatrix}
\hat{u}^{i}_{0}\\
\vdots\\
\hat{u}^{i}_{k-2}
\end{bmatrix}+
\hat{F}_{k}
\begin{bmatrix}
\hat{w}^{i}_{0}\\
\vdots\\
\hat{w}^{i}_{k-2}
\end{bmatrix}+
\Phi(k-1,0)\hat{x}^{i}_{0}.
\end{aligned}
\end{equation}

Letting $\hat{z}^{i}_{k}=\begin{bmatrix} \hat{x}^{i*}_{k}& \hat{u}^{i*}_{k} \end{bmatrix}^{*}\in \mathbb{R}^{n+p}$ for $k\geq 0$, one can verify that $\hat{z}^{i}_{k}\sim \mathcal{N}(0,
\hat{\Sigma}_{k})$, where
\begin{equation} \label{covariance_p}
\hat{\Sigma}_{k}=\left[\begin{smallmatrix}
\sigma_{\hat{u}}^{2}\hat{G}_{k+1}\hat{G}_{k+1}^{*}+\sigma_{\hat{w}}^{2}\hat{F}_{k+1}\hat{F}_{k+1}^{*}+\sigma_{\hat{x}}^{2}\Phi(k,0)\Phi(k,0)^{*} & 0\\
0 &\sigma_{\hat{u}}^{2}I_{p}
\end{smallmatrix}\right].
\end{equation}
Again, we have $\lambda_{min}(\hat{\Sigma}_{k})>0$ for all $k \geq 0$.

Further, the matrices  $\hat{X}^{i}\in \mathbb{R}^{n\times T},\hat{Z}^{i}\in \mathbb{R}^{(n+p)\times T},\hat{W}^{i}\in \mathbb{R}^{n\times T},\hat{X}\in \mathbb{R}^{n\times N_{p}T},\hat{Z}\in \mathbb{R}^{(n+p)\times N_{p}T},\hat{W}\in \mathbb{R}^{n\times N_{p}T}$ are defined similarly, using $\hat{u}^{i}_{k},\hat{x}^{i}_{k},\hat{w}^{i}_{k}$ from system \eqref{eq:Perturbed system}.  Let $X=\begin{bmatrix}\bar{X}&\hat{X}\end{bmatrix}\in \mathbb{R}^{n\times (N_{r}+N_{p})T}, Z=\begin{bmatrix}\bar{Z}&\hat{Z}\end{bmatrix}\in \mathbb{R}^{(n+p)\times (N_{r}+N_{p})T}, W=\begin{bmatrix}\bar{W}&\hat{W}\end{bmatrix}\in \mathbb{R}^{n\times (N_{r}+N_{p})T}$ and $\delta_{\Theta_{k}}=\begin{bmatrix}
\delta_{A_{k}}&\delta_{B_{k}}\end{bmatrix}\in \mathbb{R}^{n\times (n+p)}$. Defining 
\begin{equation*}
\Delta^{i}=\begin{bmatrix}
\delta_{\Theta_{T-1}} \hat{z}^{i}_{T-1}&\cdots&\delta_{\Theta_{0}}\hat{z}^{i}_{0}
\end{bmatrix}\in \mathbb{R}^{n\times T},\\
\end{equation*}
for all $i\in\{1,\dots,N_p\}$, and denoting
\begin{equation*}
\Delta=\begin{bmatrix}
0&\cdots&0&\Delta^{1} &\cdots& \Delta^{N_{p}}\end{bmatrix}\in \mathbb{R}^{n\times (N_{r}+N_{p})T},
\end{equation*}
we have the relationship
\begin{equation} \label{relationship}
\begin{aligned}
&X=\Theta Z+W+\Delta.
\end{aligned}
\end{equation}
Letting $q_{k} \in \mathbb{R}_{\ge 0}$ be a design parameter that specifies the relative weight assigned to samples generated from the auxiliary system \eqref{eq:Perturbed system} at time step $k$, we can define $\mathcal{Q}=\diag(q_{T-1},\cdots,q_{0})\in \mathbb{R}^{T\times T}$ and $\hat{Q}=\diag(\mathcal{Q},\cdots,\mathcal{Q})\in \mathbb{R}^{N_{p}T\times N_{p}T}$. Further, define $Q=\diag(I_{N_{r}T},\hat{Q})\in \mathbb{R}^{(N_{p}+N_{r})T\times (N_{p}+N_{r})T}$. We are interested in the following weighted least squares problem:
\begin{equation} \label{Weighted pb}
\begin{aligned}
    \mathop{\min}_{\tilde{\Theta}\in \mathbb{R}^{n\times (n+p)}} \|XQ^{\frac{1}{2}}-\tilde{\Theta}ZQ^{\frac{1}{2}}\|^{2}_{F}.
\end{aligned}
\end{equation}
The well known weighted least squares estimate is $\Theta_{WLS} \triangleq\begin{bmatrix}\bar{A}_{WLS}&\bar{B}_{WLS}\end{bmatrix}$, which has the form
\begin{equation} 
\begin{aligned}
\Theta_{WLS}=XQZ^{*}(ZQZ^{*})^{-1},
\end{aligned}
\end{equation}
when the matrix $ZQZ^{*}$ is invertible. Using \eqref{relationship}, the estimation error can be expressed as
\begin{equation}
\Theta_{WLS}-\Theta=WQZ^{*}(ZQZ^{*})^{-1}+\Delta QZ^{*}(ZQZ^{*})^{-1} .\label{error_W}
\end{equation}


The above steps are encapsulated in Algorithm \ref{BatchAlgorithm}.
\begin{algorithm}[H] 
\caption{System Identification Using Auxiliary Data} \label{BatchAlgorithm}
\begin{algorithmic}[1]
\State Gather $N_{r}$ length $T$ rollouts of samples generated from the true system \eqref{eq:True system}, where $\bar{x}_{0}^{i} \sim \mathcal{N}(0,\sigma_{\bar{x}}^{2}I_{n})$ for all $1 \leq i \leq N_{r}$.
\State Gather $N_{p}$ length $T$ rollouts of samples generated from the auxiliary system  \eqref{eq:Perturbed system}, where $\hat{x}_{0}^{i}\sim \mathcal{N}(0,\sigma_{\hat{x}}^{2}I_{n})$ for all $1 \leq i \leq N_{p}$.
\State Construct the matrices $X,Q,Z$. Compute $\Theta_{WLS}=XQZ^{*}(ZQZ^{*})^{-1}$.
\State Return the first $n$ columns of $\Theta_{WLS}$ as an estimated $\bar{A}$, and the remaining columns of $\Theta_{WLS}$ as an estimated $\bar{B}$.
\end{algorithmic}
\end{algorithm}

\begin{remark}
Note that the weight parameter $q_{k}$ specifies how much we should weight the data from the auxiliary system relative to the data from the true system, and can be a function of the number of rollouts ($N_r$ and $N_p$) from each of those systems.  Our analysis in the next section, and subsequent evaluation, will provide guidance on the appropriate choice of weight parameter.
\end{remark}


Next, we provide upper bounds on the estimation error $\|\Theta_{WLS}-\Theta\|$ as a function of $N_{r}, N_{p}, \|\delta_{\Theta_{k}}\|, q_{k}$ and other system parameters. Moreover, we will provide insights into the case when the auxiliary system is time-invariant.

\section{Analysis of the System Identification Error}
To upper bound the estimation error in \eqref{error_W}, we will upper bound the error terms $\|WQZ\|, \|(ZQZ^{*})^{-1}\|$, and $\|\Delta QZ^{*}\|$ separately.  We will start with some intermediate results pertaining to these quantities. 

\subsection{Intermediate Results}
We will rely on the following lemma from \cite[Corollary~5.35]{vershynin2010introduction}, which provides non-asymptotic lower bound and upper bound of a standard Wishart matrix.
\begin{lemma} Let $u_{i}\sim \mathcal{N}(0,I_{n+p})$, $i=1,\ldots,N$ be i.i.d random vectors. For any fixed $\delta >0$, with probability at least $1-\delta$, we have the following inequalities:
\begin{equation*}
   \sqrt{\lambda_{min}(\sum_{i=1}^{N}u_{i}u_{i}^{*}})\geq \sqrt{N}-\sqrt{n+p}-\sqrt{2\log{\frac{2}{\delta}}},
\end{equation*}
\begin{equation*}
   \sqrt{\lambda_{max}(\sum_{i=1}^{N}u_{i}u_{i}^{*}})\leq \sqrt{N}+\sqrt{n+p}+\sqrt{2\log{\frac{2}{\delta}}}.
\end{equation*}
\label{lemma:Bound of unit variance gaussian}
\end{lemma}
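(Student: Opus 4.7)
The plan is to rewrite $\sum_{i=1}^{N} u_i u_i^{*} = U U^{*}$, where $U = \begin{bmatrix} u_1 & \cdots & u_N \end{bmatrix} \in \mathbb{R}^{(n+p) \times N}$ is a matrix with i.i.d.\ standard Gaussian entries. The eigenvalues of $UU^{*}$ are the squares of the singular values of $U$, so the two inequalities are equivalent to the high-probability statements $s_{\max}(U) \le \sqrt{N} + \sqrt{n+p} + \sqrt{2\log(2/\delta)}$ and $s_{\min}(U) \ge \sqrt{N} - \sqrt{n+p} - \sqrt{2\log(2/\delta)}$, which is the standard form one finds for the extreme singular values of a Gaussian matrix.

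First I would bound the expectations of the extreme singular values using Gordon's theorem for Gaussian matrices, which gives $\mathbb{E}[s_{\max}(U)] \le \sqrt{N} + \sqrt{n+p}$ and $\mathbb{E}[s_{\min}(U)] \ge \sqrt{N} - \sqrt{n+p}$. Both bounds follow from applying the Sudakov--Fernique / Gordon comparison inequality to the Gaussian process $(x,y) \mapsto y^{*} U x$ indexed by pairs of unit vectors $x \in S^{N-1}$ and $y \in S^{n+p-1}$, and comparing to a simpler process built from two independent standard Gaussian vectors of dimensions $N$ and $n+p$, whose sup and minimax values are easy to evaluate.

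Next I would convert the expectation bounds into high-probability bounds via Gaussian concentration. The maps $A \mapsto s_{\max}(A)$ and $A \mapsto s_{\min}(A)$ are $1$-Lipschitz with respect to the Frobenius norm (a direct consequence of Weyl's perturbation inequality), so Borell's inequality for the canonical Gaussian measure on $\mathbb{R}^{(n+p) \times N}$ yields
\[
\Pr\bigl(s_{\max}(U) - \mathbb{E}[s_{\max}(U)] > t\bigr) \le \exp(-t^2/2),
\]
together with a symmetric estimate for the lower tail of $s_{\min}(U)$. Choosing $t = \sqrt{2\log(2/\delta)}$ drives each failure probability to at most $\delta/2$, and a union bound combined with the expectation estimates delivers both claimed inequalities simultaneously with probability at least $1 - \delta$.

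The hard part is the Gordon / Sudakov--Fernique step that controls the expected extreme singular values, as it relies on a nontrivial Gaussian comparison argument; the concentration half is routine once the $1$-Lipschitz property of the singular value functions is in hand. Since the statement is quoted directly from \cite{vershynin2010introduction}, one could instead simply appeal to that reference, but the sketch above makes transparent how the three standard ingredients---Gordon's theorem, Gaussian Lipschitz concentration, and a union bound---combine to produce the stated non-asymptotic two-sided bound on the eigenvalues of the Wishart matrix.
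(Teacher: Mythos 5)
Your proposal is correct: the paper offers no proof of this lemma, instead citing it directly as Corollary~5.35 of \cite{vershynin2010introduction}, and your sketch reconstructs exactly the standard argument behind that corollary (Gordon's comparison theorem for the expected extreme singular values, Gaussian Lipschitz concentration via the $1$-Lipschitz dependence of singular values on the matrix in Frobenius norm, and a union bound with $t=\sqrt{2\log(2/\delta)}$). The translation between eigenvalues of the Wishart matrix and singular values of the Gaussian matrix, and the bookkeeping giving failure probability $\delta/2$ per tail, are both handled correctly.
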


We have the following results.
\begin{proposition} \label{Prop:Lower Bound Aux}
For any fixed $ \delta >0$, let $N_{p}\geq N_{0} \triangleq 8(n+p)+16\log\frac{2T}{\delta}$. With probability at least $1-\delta$, we have the following inequalities:
\begin{equation*}
\hat{Z}\hat{Q}\hat{Z}^{*} \succeq \frac{N_{p}}{4}\sum_{k=0}^{T-1}q_{k}\hat{\Sigma}_{k},
\end{equation*}
\begin{equation*}
\|\Delta QZ^{*}\|\leq\frac{9N_{p}}{4}\sum_{k=0}^{T-1}q_{k}\|\delta_{\Theta_{k}}\|\|\hat{\Sigma}_{k}\|.
\end{equation*}
\end{proposition}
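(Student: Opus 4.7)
My plan is to unfold the matrix products $\hat{Z}\hat{Q}\hat{Z}^{*}$ and $\Delta Q Z^{*}$ into sums indexed by the time step $k$, and then apply the Gaussian Wishart bounds in Lemma~1 separately at each $k$, followed by a union bound over $k = 0, 1, \ldots, T-1$.

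First I would observe that $\hat{Z}\hat{Q}\hat{Z}^{*} = \sum_{i=1}^{N_{p}} \hat{Z}^{i}\mathcal{Q}\hat{Z}^{i*} = \sum_{k=0}^{T-1} q_{k} \sum_{i=1}^{N_{p}} \hat{z}^{i}_{k}\hat{z}^{i*}_{k}$, since $\mathcal{Q}$ is diagonal with entries $q_{T-1},\ldots,q_{0}$. Similarly, because the first $N_{r}T$ columns of $\Delta$ are zero and the leading block of $Q$ is the identity, one can verify that $\Delta Q Z^{*} = \sum_{i=1}^{N_{p}} \Delta^{i}\mathcal{Q}\hat{Z}^{i*} = \sum_{k=0}^{T-1} q_{k}\,\delta_{\Theta_{k}} \sum_{i=1}^{N_{p}} \hat{z}^{i}_{k}\hat{z}^{i*}_{k}$.

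Next, fixing $k$, the samples $\{\hat{z}^{i}_{k}\}_{i=1}^{N_{p}}$ are i.i.d.\ $\mathcal{N}(0,\hat{\Sigma}_{k})$, so I would write $\hat{z}^{i}_{k} = \hat{\Sigma}_{k}^{1/2}v^{i}_{k}$ with $v^{i}_{k}\sim\mathcal{N}(0,I_{n+p})$ i.i.d. Applying Lemma~1 with confidence parameter $\delta/T$ yields
\begin{equation*}
\sqrt{\lambda_{\min}\bigl(\textstyle\sum_{i} v^{i}_{k}v^{i*}_{k}\bigr)} \geq \sqrt{N_{p}} - \sqrt{n+p} - \sqrt{2\log(2T/\delta)},
\end{equation*}
and the analogous upper bound. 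The choice $N_{p} \geq N_{0} = 8(n+p) + 16\log(2T/\delta)$ is calibrated so that $\sqrt{n+p} + \sqrt{2\log(2T/\delta)} \leq \sqrt{N_{p}}/2$ via the elementary estimate $(\sqrt{a}+\sqrt{b})^{2}\leq 2(a+b)$, which then gives the clean bounds $\lambda_{\min} \geq N_{p}/4$ and $\lambda_{\max} \leq 9N_{p}/4$. Sandwiching by $\hat{\Sigma}_{k}^{1/2}$ translates these into $\sum_{i}\hat{z}^{i}_{k}\hat{z}^{i*}_{k} \succeq (N_{p}/4)\hat{\Sigma}_{k}$ and $\|\sum_{i}\hat{z}^{i}_{k}\hat{z}^{i*}_{k}\| \leq (9N_{p}/4)\|\hat{\Sigma}_{k}\|$.

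Finally I would take a union bound over $k = 0, \ldots, T-1$ so that both inequalities hold simultaneously with probability at least $1-\delta$. Summing the lower bound against the nonnegative weights $q_{k}$ gives the first claim directly, and plugging the operator-norm upper bound into the expression for $\Delta Q Z^{*}$ above, together with submultiplicativity $\|\delta_{\Theta_{k}} M\| \leq \|\delta_{\Theta_{k}}\|\|M\|$, yields the second claim. The main obstacle is really just the bookkeeping: correctly identifying that $\Delta Q Z^{*}$ only involves the auxiliary rollouts (so the Wishart bound applied to the i.i.d.\ $\hat{z}^{i}_{k}$ samples is what is needed), and verifying that the threshold $N_{0}$ is the right value to absorb the $\sqrt{n+p}$ and $\sqrt{2\log(2T/\delta)}$ corrections into the clean constants $1/4$ and $9/4$ after the union bound over the $T$ time indices.
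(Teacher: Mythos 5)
Your proposal is correct and follows essentially the same route as the paper's own proof: unfolding both matrix products into per-time-step sums, whitening $\hat{z}^{i}_{k}=\hat{\Sigma}_{k}^{1/2}v^{i}_{k}$, applying Lemma~1 with confidence $\delta/T$ at each $k$ (with $N_{0}$ calibrated via $(\sqrt{a}+\sqrt{b})^{2}\leq 2(a+b)$ to yield the constants $N_{p}/4$ and $9N_{p}/4$), and closing with a union bound over the $T$ time indices. No gaps.
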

\begin{proof}
We have
\begin{align*}
\hat{Z}\hat{Q}\hat{Z}^{*}=\sum_{i=1}^{N_{p}}\hat{Z}^{i}\mathcal{Q}\hat{Z}^{i*}
 =\sum_{i=1}^{N_{p}}\sum_{k=0}^{T-1}q_{k}\hat{z}^{i}_{k}\hat{z}^{i*}_{k}=\sum_{k=0}^{T-1}\sum_{i=1}^{N_{p}}q_{k}\hat{z}^{i}_{k}\hat{z}^{i*}_{k}
 .
\end{align*}
For any fixed $k$, define $\hat{\mathbf{u}}^{i}_{k}=\hat{\Sigma}_{k}^{-\frac{1}{2}}\hat{z}^{i}_{k}$. Note that $\hat{\mathbf{u}}^{i}_{k}$ are i.i.d random vectors with $\hat{\mathbf{u}}^{i}_{k}\sim \mathcal{N}(0,I_{n+p})$ for $i\in\{1,2,\ldots,N_{p}\}$. Hence, the above sum can be written as
\begin{equation} \label{sum of z lower}
\begin{aligned}
&\sum_{k=0}^{T-1}\sum_{i=1}^{N_{p}}q_{k}\hat{\Sigma}_{k}^{\frac{1}{2}}\hat{\mathbf{u}}^{i}_{k}\hat{\mathbf{u}}^{i*}_{k}\hat{\Sigma}_{k}^{\frac{1}{2}}=\sum_{k=0}^{T-1}q_{k}\hat{\Sigma}_{k}^{\frac{1}{2}}(\sum_{i=1}^{N_{p}}\hat{\mathbf{u}}^{i}_{k}\hat{\mathbf{u}}^{i*}_{k})\hat{\Sigma}_{k}^{\frac{1}{2}}.\\
\end{aligned}
\end{equation}

Similarly, we have
\begin{align}\nonumber
\|\Delta QZ^{*}\|&=\|\sum_{i=1}^{N_{p}}\Delta^{i}\mathcal{Q}\hat{Z}^{i*}\|=\|\sum_{i=1}^{N_{p}}\sum_{k=0}^{T-1}q_{k}\delta_{\Theta_{k}}\hat{z}^{i}_{k}\hat{z}^{i*}_{k}\|\\\nonumber
&=\|\sum_{k=0}^{T-1}\sum_{i=1}^{N_{p}}q_{k}\delta_{\Theta_{k}}\hat{z}^{i}_{k}\hat{z}^{i*}_{k}\|\\\nonumber
&\leq \sum_{k=0}^{T-1}q_{k}\|\delta_{\Theta_{k}}\|\|\sum_{i=1}^{N_{p}}\hat{z}^{i}_{k}\hat{z}^{i*}_{k}\|\\\nonumber
&=\sum_{k=0}^{T-1}q_{k}\|\delta_{\Theta_{k}}\|\|\hat{\Sigma}_{k}^{\frac{1}{2}}(\sum_{i=1}^{N_{p}}\hat{\mathbf{u}}^{i}_{k}\hat{\mathbf{u}}^{i*}_{k})\hat{\Sigma}_{k}^{\frac{1}{2}}\|\\
&\leq \sum_{k=0}^{T-1}q_{k}\|\delta_{\Theta_{k}}\|\|\hat{\Sigma}_{k}\|\|\sum_{i=1}^{N_{p}}\hat{\mathbf{u}}^{i}_{k}\hat{\mathbf{u}}^{i*}_{k}\|.\label{sum of z upper}
\end{align}

Fixing $\delta>0$ and $k$ and applying Lemma \ref{lemma:Bound of unit variance gaussian}, we have with probability at least $1-\frac{\delta}{T}$ the following:
\begin{equation*}
   \sqrt{\lambda_{min}(\sum_{i=1}^{N_{p}}\mathbf{u}^{i}_{k}\mathbf{u}^{i*}_{k}})\geq \sqrt{N_{p}}-\sqrt{n+p}-\sqrt{2\log{\frac{2T}{\delta}}},
\end{equation*} 
\begin{equation*}
   \sqrt{\lambda_{max}(\sum_{i=1}^{N_{p}}\mathbf{u}^{i}_{k}\mathbf{u}^{i*}_{k}})\leq \sqrt{N_{p}}+\sqrt{n+p}+\sqrt{2\log{\frac{2T}{\delta}}}.
\end{equation*}
Further, we have
\begin{equation*}
\begin{aligned}
  &\frac{1}{2}\sqrt{N_{p}}\geq \sqrt{n+p}+\sqrt{2\log{\frac{2T}{\delta}}}\\
   \Longleftrightarrow\quad &\frac{N_{p}}{4}\geq(\sqrt{n+p}+\sqrt{2\log{\frac{2T}{\delta}}})^{2}.
   \end{aligned}
\end{equation*}
Noting the inequality $2(a^2+b^2)\geq(a+b)^{2}$, we can write
\begin{equation*}
\begin{aligned}
 2(n+p+2\log{\frac{2T}{\delta}})\geq  (\sqrt{n+p}+\sqrt{2\log{\frac{2T}{\delta}}})^2.
   \end{aligned}
\end{equation*}
Letting $N_{p}\geq 8(n+p)+16\log{\frac{2T}{\delta}}$, one can then show that the following inequalities hold with probability at least $1-\frac{\delta}{T}$:
\begin{equation*}
\begin{aligned}
 \sqrt{\lambda_{min}(\sum_{i=1}^{N_{p}}\mathbf{u}^{i}_{k}\mathbf{u}^{i*}_{k}})&\geq \frac{1}{2}\sqrt{N_{p}}+\frac{1}{2}\sqrt{N_{p}}\\
 &\quad-\sqrt{n+p}-\sqrt{2\log{\frac{2T}{\delta}}}\\
   &\geq\frac{1}{2}\sqrt{N_{p}},
   \end{aligned}
\end{equation*}
\begin{equation*}
\begin{aligned}
 \sqrt{\lambda_{max}(\sum_{i=1}^{N_{p}}\mathbf{u}^{i}_{k}\mathbf{u}^{i*}_{k}})&\leq \sqrt{N_{p}}+\sqrt{n+p}+\sqrt{2\log{\frac{2T}{\delta}}}\\
   &\leq\frac{3}{2}\sqrt{N_{p}}.
   \end{aligned}
\end{equation*}
It follows that with probability at least $1-\frac{\delta}{T}$,
\begin{equation*}
\begin{aligned}
q_{k}\hat{\Sigma}_{k}^{\frac{1}{2}}(\sum_{i=1}^{N_{p}}\hat{\mathbf{u}}^{i}_{k}\hat{\mathbf{u}}^{i*}_{k})\hat{\Sigma}_{k}^{\frac{1}{2}}\succeq q_{k}\frac{N_{p}}{4}\hat{\Sigma}_{k},
\end{aligned}
\end{equation*}
and
\begin{equation*}
\begin{aligned}
q_{k}\|\delta_{\Theta_{k}}\|\|\hat{\Sigma}_{k}\|\|\sum_{i=1}^{N_{p}}\hat{\mathbf{u}}^{i}_{k}\hat{\mathbf{u}}^{i*}_{k}\|\leq\frac{9N_{p}}{4} q_{k}\|\delta_{\Theta_{k}}\|\|\hat{\Sigma}_{k}\|.
\end{aligned}
\end{equation*}

The result follows by applying a union bound for all $k$ in \eqref{sum of z lower} and \eqref{sum of z upper}, 
\end{proof}
\begin{proposition} \label{Prop:Lower Bound True}
Fixing $\delta>0$, and letting $N_{r}\geq N_{0}=8(n+p)+16\log\frac{2T}{\delta}$, we have with probability at least $1-\delta$, 
\begin{equation*}
\bar{Z}\bar{Z}^{*} \succeq \frac{N_{r}}{4}\sum_{k=0}^{T-1}\bar{\Sigma}_{k}.
\end{equation*}
\end{proposition}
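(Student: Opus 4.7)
The plan is to mirror the argument already used for the first inequality in Proposition~\ref{Prop:Lower Bound Aux}, specialized to the true (time-invariant) system with unit weights $q_k = 1$. The overall strategy is to decompose the Gram matrix $\bar{Z}\bar{Z}^*$ as a double sum, whiten the Gaussian vectors $\bar{z}^i_k$ time-slice by time-slice, apply the Wishart concentration bound from Lemma~\ref{lemma:Bound of unit variance gaussian} at each time index, and close with a union bound over time.

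First I would expand
\begin{equation*}
\bar{Z}\bar{Z}^* = \sum_{i=1}^{N_r} \bar{Z}^i \bar{Z}^{i*} = \sum_{i=1}^{N_r} \sum_{k=0}^{T-1} \bar{z}^i_k \bar{z}^{i*}_k = \sum_{k=0}^{T-1} \sum_{i=1}^{N_r} \bar{z}^i_k \bar{z}^{i*}_k,
\end{equation*}
exactly as in the auxiliary case. For each fixed $k$, I would define the whitened vector $\bar{\mathbf{u}}^i_k \triangleq \bar{\Sigma}_k^{-1/2} \bar{z}^i_k$, which is well defined because $\lambda_{\min}(\bar{\Sigma}_k) > 0$ as noted after \eqref{covariance}. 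Since the rollouts are independent and $\bar{z}^i_k \sim \mathcal{N}(0,\bar{\Sigma}_k)$, the vectors $\{\bar{\mathbf{u}}^i_k\}_{i=1}^{N_r}$ are i.i.d.\ $\mathcal{N}(0, I_{n+p})$. This gives the identity
\begin{equation*}
\sum_{i=1}^{N_r} \bar{z}^i_k \bar{z}^{i*}_k = \bar{\Sigma}_k^{1/2}\Bigl(\sum_{i=1}^{N_r} \bar{\mathbf{u}}^i_k \bar{\mathbf{u}}^{i*}_k\Bigr)\bar{\Sigma}_k^{1/2}.
\end{equation*}

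Next, I would invoke Lemma~\ref{lemma:Bound of unit variance gaussian} at confidence level $\delta/T$ for each $k$: under the hypothesis $N_r \geq 8(n+p) + 16\log(2T/\delta)$, the exact arithmetic already carried out in the proof of Proposition~\ref{Prop:Lower Bound Aux} (the step bounding $\sqrt{n+p} + \sqrt{2\log(2T/\delta)}$ by $\tfrac12\sqrt{N_r}$) yields
\begin{equation*}
\sum_{i=1}^{N_r} \bar{\mathbf{u}}^i_k \bar{\mathbf{u}}^{i*}_k \succeq \tfrac{N_r}{4} I_{n+p}
\end{equation*}
with probability at least $1 - \delta/T$. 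Conjugating by $\bar{\Sigma}_k^{1/2}$ preserves the semidefinite order, so on the same event $\sum_i \bar{z}^i_k \bar{z}^{i*}_k \succeq (N_r/4)\bar{\Sigma}_k$.

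Finally, a union bound over $k \in \{0,\dots,T-1\}$ shows that all $T$ inequalities hold simultaneously with probability at least $1 - \delta$, and summing them over $k$ gives the claimed bound $\bar{Z}\bar{Z}^* \succeq (N_r/4)\sum_{k=0}^{T-1}\bar{\Sigma}_k$. I do not anticipate any real obstacle here, since the argument is a direct specialization of the lower-bound half of Proposition~\ref{Prop:Lower Bound Aux} with $q_k \equiv 1$; the only thing to verify carefully is that the same sample-size threshold $N_0$ suffices, which it does because the only place $N_p$ entered the earlier calculation was through the bound $\tfrac12\sqrt{N_p} \geq \sqrt{n+p} + \sqrt{2\log(2T/\delta)}$.
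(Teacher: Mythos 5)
Your proposal is correct and follows exactly the paper's route: the paper proves this proposition by simply noting that it is Proposition~\ref{Prop:Lower Bound Aux} with $q_k$ replaced by $1$, and your write-up just spells out that specialization (whitening $\bar{z}^i_k$, applying Lemma~\ref{lemma:Bound of unit variance aussian} at level $\delta/T$, and union bounding over $k$) in full detail. No gaps.
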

\begin{proof}
Replacing $q_{k}$ by $1$, the proof follows directly from Proposition \ref{Prop:Lower Bound Aux}.
\end{proof}

Next, We will leverage the following lemma from\cite[Lemma~1]{dean2019sample} to bound the contribution from the noise terms.
\begin{lemma}
Let $f_{i}\in\mathbb{R}^{m}$, $g_{i}\in\mathbb{R}^{n}$ be independent random vectors $f_{i}\sim \mathcal{N}(0,\Sigma_{f})$ and $g_{i}\sim \mathcal{N}(0,\Sigma_{g})$, for $i=1,\cdots,N$. Let $N\geq2(n+m)\log{\frac{1}{\delta}}$. For any fixed $ \delta>0$, we have with probability at least $1-\delta$,
\begin{equation*}
   \|\sum_{i=1}^{N}f_{i}g_{i}^{*}\|\leq4\|\Sigma_{f}\|^{\frac{1}{2}}\|\Sigma_{g}\|^{\frac{1}{2}}\sqrt{N(m+n)\log{\frac{9}{\delta}}}.
\end{equation*}
\label{lemma:upper bound two independent gaussian}
\end{lemma}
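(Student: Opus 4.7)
The plan is to reduce to the isotropic case by whitening, and then combine a sub-exponential Bernstein bound on each fixed bilinear form with an $\epsilon$-net argument over the two unit spheres.

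Concretely, I would write $f_{i}=\Sigma_{f}^{1/2}\tilde{f}_{i}$ and $g_{i}=\Sigma_{g}^{1/2}\tilde{g}_{i}$, where $\tilde{f}_{i}\sim\mathcal{N}(0,I_{m})$ and $\tilde{g}_{i}\sim\mathcal{N}(0,I_{n})$ are independent across $i$. Setting $M\triangleq\sum_{i=1}^{N}\tilde{f}_{i}\tilde{g}_{i}^{*}$, one has $\sum_{i=1}^{N}f_{i}g_{i}^{*}=\Sigma_{f}^{1/2}M\Sigma_{g}^{1/2}$, and submultiplicativity of the spectral norm gives
\begin{equation*}
\Bigl\|\sum_{i=1}^{N}f_{i}g_{i}^{*}\Bigr\|\leq\|\Sigma_{f}\|^{1/2}\|\Sigma_{g}\|^{1/2}\|M\|.
\end{equation*}
The task thereby reduces to showing $\|M\|\leq 4\sqrt{N(m+n)\log(9/\delta)}$ with probability at least $1-\delta$ under the hypothesis on $N$.

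For fixed unit vectors $u\in\mathbb{S}^{m-1}$ and $v\in\mathbb{S}^{n-1}$, the bilinear form $u^{*}Mv=\sum_{i=1}^{N}(u^{*}\tilde{f}_{i})(v^{*}\tilde{g}_{i})$ is a sum of $N$ independent centered products of two independent standard Gaussians. Each summand has absolute-constant sub-exponential norm, so a Bernstein-type inequality yields
\begin{equation*}
\Pr\bigl(|u^{*}Mv|\geq s\bigr)\leq 2\exp\bigl(-c\min\{s^{2}/N,\,s\}\bigr)
\end{equation*}
for some absolute constant $c>0$. Next I would pick a $(1/4)$-net $\mathcal{N}_{m}$ of $\mathbb{S}^{m-1}$ of cardinality at most $9^{m}$ and likewise $\mathcal{N}_{n}$ of $\mathbb{S}^{n-1}$ of cardinality at most $9^{n}$; the standard discretization inequality gives $\|M\|\leq 2\sup_{u\in\mathcal{N}_{m},\,v\in\mathcal{N}_{n}}|u^{*}Mv|$, and a union bound over the $9^{m+n}$ pairs produces
\begin{equation*}
\Pr\bigl(\|M\|\geq 2s\bigr)\leq 2\cdot 9^{m+n}\exp\bigl(-c\min\{s^{2}/N,\,s\}\bigr).
\end{equation*}

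To finish, I would set $s=C\sqrt{N(m+n)\log(9/\delta)}$ for a suitable constant $C$ and verify that the hypothesis $N\geq 2(m+n)\log(1/\delta)$ forces $s\leq N$, so that the minimum is attained by the quadratic term $s^{2}/N$; the combinatorial factor $9^{m+n}$ is then absorbed as $(m+n)\log 9$ in the exponent and combines with $\log(1/\delta)$ to give the $\log(9/\delta)$ inside the square root. The main obstacle is the bookkeeping of constants so that the final bound reads exactly $4\|\Sigma_{f}\|^{1/2}\|\Sigma_{g}\|^{1/2}\sqrt{N(m+n)\log(9/\delta)}$: the Bernstein constant $c$, the prefactor $2$ coming from the net, and the choice of net radius $1/4$ (which sets the base $9=1+2/\epsilon$) must all be tuned consistently. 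Once the constants align, multiplying through by $\|\Sigma_{f}\|^{1/2}\|\Sigma_{g}\|^{1/2}$ delivers the stated inequality.
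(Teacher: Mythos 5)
This lemma is not proved in the paper at all: it is imported verbatim from \cite[Lemma~1]{dean2019sample}, so the relevant comparison is against that reference's argument. Your skeleton --- whiten to reduce to $M=\sum_{i}\tilde{f}_{i}\tilde{g}_{i}^{*}$ with isotropic factors, discretize both spheres by $1/4$-nets of cardinalities $9^{m}$ and $9^{n}$, bound each bilinear form $u^{*}Mv=\sum_{i}(u^{*}\tilde{f}_{i})(v^{*}\tilde{g}_{i})$, and absorb the $9^{m+n}$ union bound into the $(m+n)\log(9/\delta)$ term using $m+n\geq 1$ --- is exactly the standard architecture and matches the cited proof up to the concentration step.

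The gap is in that concentration step, and it is not merely bookkeeping. A Bernstein inequality for sums of sub-exponential variables gives $\Pr(|u^{*}Mv|\geq s)\leq 2\exp(-c\min\{s^{2}/(NK^{2}),s/K\})$ only for an unspecified absolute constant $c$ and an unspecified $\psi_{1}$-norm $K$ of the product of two standard Gaussians; no tuning of the net radius recovers the explicit constant $4$ in the statement, so as written you would prove the lemma with some larger, unnamed constant --- a genuinely weaker statement. A secondary issue: your verification that the hypothesis forces the quadratic branch of the minimum is not airtight, since $N\geq 2(m+n)\log(1/\delta)$ does not imply $N\gtrsim (m+n)\log(9/\delta)$ when $\delta$ is not small. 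The repair, and the route taken in the cited reference, is to exploit Gaussianity exactly rather than through a generic sub-exponential bound: condition on $\{\tilde{g}_{i}\}$, so that $u^{*}Mv$ is \emph{exactly} $\mathcal{N}(0,\sum_{i}(v^{*}\tilde{g}_{i})^{2})$; control the conditional variance $\sum_{i}(v^{*}\tilde{g}_{i})^{2}\leq 2N$ by an explicit chi-squared (or Gaussian-matrix operator norm) tail bound --- this is precisely where the hypothesis $N\geq 2(m+n)\log(1/\delta)$ is consumed; and then the explicit Gaussian tail factor $\sqrt{2\log(\cdot)}$, the $\sqrt{2N}$ variance bound, and the factor $2$ from the $1/4$-net discretization multiply out to exactly $4\sqrt{N(m+n)\log(9/\delta)}$.
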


We have the following results.
\begin{proposition} \label{Prop:Upper Bound Aux}
For any fixed $ \delta >0$, let $N_{p}\geq N_{1} \triangleq (4n+2p)\log\frac{T}{\delta}$. We have with probability at least $1-\delta$
\begin{equation*}
\|\hat{W}\hat{Q}\hat{Z}^{*}\|\leq4\sigma_{\hat{w}}\sqrt{N_{p}(2n+p)\log\frac{9T}{\delta}}\sum_{k=0}^{T-1}q_{k}\|\hat{\Sigma}_{k}^{\frac{1}{2}}\|.
\end{equation*}
\end{proposition}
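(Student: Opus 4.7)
The plan is to mimic the decomposition used in Proposition \ref{Prop:Lower Bound Aux}, turn the matrix $\hat{W}\hat{Q}\hat{Z}^{*}$ into a sum over time indices of independent cross-moment matrices, and then apply Lemma \ref{lemma:upper bound two independent gaussian} to each term.

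First I would expand
\begin{equation*}
\hat{W}\hat{Q}\hat{Z}^{*} = \sum_{i=1}^{N_{p}}\hat{W}^{i}\mathcal{Q}\hat{Z}^{i*} = \sum_{k=0}^{T-1}q_{k}\sum_{i=1}^{N_{p}}\hat{w}^{i}_{k}\hat{z}^{i*}_{k},
\end{equation*}
and then push the norm inside via the triangle inequality to obtain $\|\hat{W}\hat{Q}\hat{Z}^{*}\|\leq\sum_{k=0}^{T-1}q_{k}\|\sum_{i=1}^{N_{p}}\hat{w}^{i}_{k}\hat{z}^{i*}_{k}\|$. The key observation that allows us to use Lemma \ref{lemma:upper bound two independent gaussian} is that, for each fixed $k$, the vector $\hat{w}^{i}_{k}$ is independent of $\hat{z}^{i}_{k}=[\hat{x}^{i*}_{k}\ \hat{u}^{i*}_{k}]^{*}$: the state $\hat{x}^{i}_{k}$ depends only on $\hat{w}^{i}_{0},\ldots,\hat{w}^{i}_{k-1}$ and on inputs and initial state, while $\hat{u}^{i}_{k}$ is drawn independently of all noises. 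Moreover, $\hat{w}^{i}_{k}\sim\mathcal{N}(0,\sigma_{\hat{w}}^{2}I_{n})$ and $\hat{z}^{i}_{k}\sim\mathcal{N}(0,\hat{\Sigma}_{k})$, so the hypotheses of Lemma \ref{lemma:upper bound two independent gaussian} are met with $m=n$, dimension-of-$g$ equal to $n+p$, $\Sigma_{f}=\sigma_{\hat{w}}^{2}I_{n}$ and $\Sigma_{g}=\hat{\Sigma}_{k}$.

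Applying Lemma \ref{lemma:upper bound two independent gaussian} at a confidence level $\delta/T$ then gives, for each fixed $k$ and whenever $N_{p}\geq 2(n+(n+p))\log\frac{T}{\delta}=(4n+2p)\log\frac{T}{\delta}$,
\begin{equation*}
\bigl\|\sum_{i=1}^{N_{p}}\hat{w}^{i}_{k}\hat{z}^{i*}_{k}\bigr\|\leq 4\sigma_{\hat{w}}\|\hat{\Sigma}_{k}^{\frac{1}{2}}\|\sqrt{N_{p}(2n+p)\log\frac{9T}{\delta}},
\end{equation*}
with probability at least $1-\delta/T$. A union bound over $k\in\{0,\ldots,T-1\}$ yields the simultaneous event of probability at least $1-\delta$, and substituting back into the triangle-inequality bound produces the claimed inequality, since the factor $\sqrt{N_{p}(2n+p)\log(9T/\delta)}$ is independent of $k$ and pulls out of the sum.

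The only real subtlety is the bookkeeping: matching the dimensions in the lemma to the shape of $\hat{w}^{i}_{k}$ and $\hat{z}^{i}_{k}$ (so that the ``$m+n$'' factor in the lemma becomes $2n+p$ here), and choosing the per-$k$ confidence level $\delta/T$ so that the sample-size requirement $N_{p}\geq 2(m+n)\log\frac{1}{\delta/T}$ in Lemma \ref{lemma:upper bound two independent gaussian} becomes exactly the stated threshold $N_{1}=(4n+2p)\log\frac{T}{\delta}$. Everything else is mechanical.
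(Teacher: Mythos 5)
Your proposal is correct and follows essentially the same route as the paper's own proof: the same expansion of $\hat{W}\hat{Q}\hat{Z}^{*}$ into per-time-step sums, the same application of Lemma \ref{lemma:upper bound two independent gaussian} at confidence level $\delta/T$ with the dimensions $m=n$ and $n+p$ giving the $2n+p$ factor, and the same union bound over $k$. The only addition is your explicit justification that $\hat{w}^{i}_{k}$ is independent of $\hat{z}^{i}_{k}$, which the paper leaves implicit but which is indeed the hypothesis needed to invoke the lemma.
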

\begin{proof}
From the definitions of $\hat{W}$, $\hat{Q}$ and $\hat{Z}$, 
\begin{equation} \label{to union}
\begin{aligned}
\|\hat{W}\hat{Q}\hat{Z}^{*}\|&=\|\sum_{i=1}^{N_{p}}\hat{W}^{i}\mathcal{Q}\hat{Z}^{i*}\|
 =\|\sum_{i=1}^{N_{p}}\sum_{k=0}^{T-1}q_{k}\hat{w}^{i}_{k}\hat{z}^{i*}_{k}\|\\
 &=\|\sum_{k=0}^{T-1}\sum_{i=1}^{N_{p}}q_{k}\hat{w}^{i}_{k}\hat{z}^{i*}_{k}\|\leq\sum_{k=0}^{T-1}\|\sum_{i=1}^{N_{p}}q_{k}\hat{w}^{i}_{k}\hat{z}^{i*}_{k}\|.\\
\end{aligned}
\end{equation}

Fix $\delta>0$, and let $N_{p}\geq(4n+2p)\log\frac{T}{\delta}$. For any fixed $k$, we can apply Lemma \ref{lemma:upper bound two independent gaussian} to obtain that with probability at least $1-\frac{\delta}{T}$,
\begin{equation*}
\begin{aligned}
\|\sum_{i=1}^{N_{p}}q_{k}\hat{w}^{i}_{k}\hat{z}^{i*}_{k}\|\leq4q_{k}\sigma_{\hat{w}}\|\hat{\Sigma}_{k}^{\frac{1}{2}}\|\sqrt{N_{p}(2n+p)\log\frac{9T}{\delta}}.\\
\end{aligned}
\end{equation*}
Further applying a union bound for all $k$ in  \eqref{to union}, we get the desired result.
\end{proof}

\begin{proposition} \label{Prop:Upper Bound True}
For any fixed $ \delta >0$, let $N_{r}\geq N_{1}=(4n+2p)\log\frac{T}{\delta}$. We have with probability at least $1-\delta$,
\begin{equation*}
\|\bar{W}\bar{Z}^{*}\|\leq4\sigma_{\bar{w}}\sqrt{N_{r}(2n+p)\log\frac{9T}{\delta}}\sum_{k=0}^{T-1}\|\bar{\Sigma}_{k}^{\frac{1}{2}}\|.
\end{equation*}
\end{proposition}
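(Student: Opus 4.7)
The plan is to mirror the argument of Proposition \ref{Prop:Upper Bound Aux}, specialized to the true system with all weights set to one, since the statement to be proved is the natural analogue of that proposition (much as Proposition \ref{Prop:Lower Bound True} is the analogue of Proposition \ref{Prop:Lower Bound Aux}). Concretely, I would first decompose $\bar{W}\bar{Z}^{*}$ into a double sum by rollout and time index, apply the triangle inequality over time, then control each per-time-step sum using Lemma \ref{lemma:upper bound two independent gaussian} and close with a union bound over $k$.

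First, by the block structure of $\bar{W}$ and $\bar{Z}$,
\begin{equation*}
\bar{W}\bar{Z}^{*}=\sum_{i=1}^{N_{r}}\bar{W}^{i}\bar{Z}^{i*}=\sum_{k=0}^{T-1}\sum_{i=1}^{N_{r}}\bar{w}^{i}_{k}\bar{z}^{i*}_{k},
\end{equation*}
so that $\|\bar{W}\bar{Z}^{*}\|\leq\sum_{k=0}^{T-1}\|\sum_{i=1}^{N_{r}}\bar{w}^{i}_{k}\bar{z}^{i*}_{k}\|$ by the triangle inequality. This is exactly the analogue of \eqref{to union} with $q_{k}\equiv 1$ and hatted quantities replaced by barred ones.

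Next, I would fix $k$ and invoke Lemma \ref{lemma:upper bound two independent gaussian} with $f_{i}=\bar{w}^{i}_{k}\in\mathbb{R}^{n}$ and $g_{i}=\bar{z}^{i}_{k}\in\mathbb{R}^{n+p}$. The needed independence between $\bar{w}^{i}_{k}$ and $\bar{z}^{i}_{k}$ follows from the fact that $\bar{z}^{i}_{k}=[\bar{x}^{i*}_{k}\;\;\bar{u}^{i*}_{k}]^{*}$ is determined by $\bar{x}^{i}_{0}$, $\{\bar{u}^{i}_{j}\}_{j=0}^{k}$ and $\{\bar{w}^{i}_{j}\}_{j=0}^{k-1}$, none of which involve $\bar{w}^{i}_{k}$, together with the i.i.d.\ Gaussian assumptions stated in Section \ref{sec: problem formulation}; the covariances are $\sigma_{\bar{w}}^{2}I_{n}$ and $\bar{\Sigma}_{k}$, respectively. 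Applying the lemma at confidence level $1-\delta/T$ per time step requires $N_{r}\geq 2(n+(n+p))\log(T/\delta)=(4n+2p)\log(T/\delta)=N_{1}$, which matches the hypothesis. This yields, with probability at least $1-\delta/T$,
\begin{equation*}
\Bigl\|\sum_{i=1}^{N_{r}}\bar{w}^{i}_{k}\bar{z}^{i*}_{k}\Bigr\|\leq 4\sigma_{\bar{w}}\|\bar{\Sigma}_{k}^{\frac{1}{2}}\|\sqrt{N_{r}(2n+p)\log\tfrac{9T}{\delta}}.
\end{equation*}

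Finally, a union bound over $k\in\{0,\dots,T-1\}$ converts these per-step bounds into a single event of probability at least $1-\delta$, on which summing the $T$ inequalities and substituting into the triangle-inequality bound yields the stated estimate. No step poses a real obstacle: the argument is a direct specialization of Proposition \ref{Prop:Upper Bound Aux} with $q_{k}=1$, and the only point that merits explicit mention is the independence of $\bar{w}^{i}_{k}$ and $\bar{z}^{i}_{k}$ needed to apply Lemma \ref{lemma:upper bound two independent gaussian}.
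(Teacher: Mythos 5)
Your proposal is correct and follows essentially the same route as the paper, which proves this proposition by specializing Proposition \ref{Prop:Upper Bound Aux} with $q_{k}=1$ (decomposition into per-time-step sums, Lemma \ref{lemma:upper bound two independent gaussian} at level $\delta/T$, then a union bound over $k$). Your explicit remark on the independence of $\bar{w}^{i}_{k}$ and $\bar{z}^{i}_{k}$ is a detail the paper leaves implicit but is indeed the hypothesis needed to invoke the lemma.
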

\begin{proof}
Replacing $q_{k}$ by $1$, the proof follows directly from Proposition \ref{Prop:Upper Bound Aux}.
\end{proof}


\subsection{Main Results and Insights}
We now come to the main result of our paper, providing a bound on the system identification error in \eqref{error_W}.

\begin{theorem} \label{Bound Overall}
For any fixed $\delta>0$, let $\min{(N_{r},N_{p}})\geq \max\{N_{0},N_{1}\}$, where $N_{0}=8(n+p)+16\log\frac{2T}{\delta}, N_{1} =(4n+2p)\log\frac{T}{\delta}$. Then, with probability at least $1-4\delta$, the least squares solution to problem \eqref{Weighted pb} satisfies
\begin{equation*}
\begin{aligned}
&\max{(\|\bar{A}_{WLS}-\bar{A}\|,\|\bar{B}_{WLS}-\bar{B}\|)}\\ \le&\underbrace{\frac{c_{0}(\sqrt{N_{r}}\sigma_{\bar{w}}\sum_{k=0}^{T-1}\|\bar{\Sigma}_{k}^{\frac{1}{2}}\|+\sqrt{N_{p}}\sigma_{\hat{w}}\sum_{k=0}^{T-1}q_{k}\|\hat{\Sigma}_{k}^{\frac{1}{2}}\|)}{\lambda_{min}(N_{r}\sum_{k=0}^{T-1}\bar{\Sigma}_{k}+N_{p}\sum_{k=0}^{T-1}q_{k}\hat{\Sigma}_{k})}}_\text{Error due to noise}\\ 
&+\underbrace{\frac{9N_{p}\sum_{k=0}^{T-1}q_{k}\|\delta_{_{\Theta_{k}}}\|\|\hat{\Sigma}_{k}\|}{\lambda_{min}(N_{r}\sum_{k=0}^{T-1}\bar{\Sigma}_{k}+N_{p}\sum_{k=0}^{T-1}q_{k}\hat{\Sigma}_{k})}}_\text{Error due to difference between true and auxiliary systems},
\end{aligned}
\end{equation*}
where
\begin{equation*}
\begin{aligned}
c_{0}=16\sqrt{(2n+p)\log\frac{9T}{\delta}}.
\end{aligned}
\end{equation*}
\end{theorem}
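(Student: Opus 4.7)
The plan is to assemble the bound directly from the error decomposition in \eqref{error_W} together with the four intermediate propositions. The starting point is the observation that $\|\bar{A}_{WLS}-\bar{A}\|$ and $\|\bar{B}_{WLS}-\bar{B}\|$ are each the spectral norm of a block column of $\Theta_{WLS}-\Theta$, and hence both are bounded by $\|\Theta_{WLS}-\Theta\|$. Submultiplicativity then gives
\begin{equation*}
\|\Theta_{WLS}-\Theta\|\le\bigl(\|WQZ^{*}\|+\|\Delta QZ^{*}\|\bigr)\,\|(ZQZ^{*})^{-1}\|,
\end{equation*}
so it suffices to bound each of these three factors with high probability and combine them by a union bound.

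For the denominator, I would exploit the block structure $Z=[\bar{Z}\ \hat{Z}]$ and $Q=\diag(I_{N_{r}T},\hat{Q})$ to write $ZQZ^{*}=\bar{Z}\bar{Z}^{*}+\hat{Z}\hat{Q}\hat{Z}^{*}$. Applying Proposition \ref{Prop:Lower Bound True} to $\bar{Z}\bar{Z}^{*}$ and the first inequality of Proposition \ref{Prop:Lower Bound Aux} to $\hat{Z}\hat{Q}\hat{Z}^{*}$, and taking a union bound, yields
\begin{equation*}
ZQZ^{*}\succeq\tfrac{1}{4}\Bigl(N_{r}\sum_{k=0}^{T-1}\bar{\Sigma}_{k}+N_{p}\sum_{k=0}^{T-1}q_{k}\hat{\Sigma}_{k}\Bigr),
\end{equation*}
so $\|(ZQZ^{*})^{-1}\|$ is at most $4$ divided by the $\lambda_{\min}$ appearing in the theorem statement. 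In particular, $ZQZ^{*}$ is invertible with high probability, which justifies the closed-form expression for $\Theta_{WLS}$ and the use of \eqref{error_W}.

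For the two numerator terms, the same block decomposition gives $WQZ^{*}=\bar{W}\bar{Z}^{*}+\hat{W}\hat{Q}\hat{Z}^{*}$ and $\Delta QZ^{*}=\sum_{i}\Delta^{i}\mathcal{Q}\hat{Z}^{i*}$ (since the first $N_{r}T$ columns of $\Delta$ are zero). The triangle inequality and Propositions \ref{Prop:Upper Bound True}--\ref{Prop:Upper Bound Aux} then produce
\begin{equation*}
\|WQZ^{*}\|\le 4\sqrt{(2n+p)\log\tfrac{9T}{\delta}}\Bigl(\sigma_{\bar{w}}\sqrt{N_{r}}\!\sum_{k}\!\|\bar{\Sigma}_{k}^{\frac{1}{2}}\|+\sigma_{\hat{w}}\sqrt{N_{p}}\!\sum_{k}\!q_{k}\|\hat{\Sigma}_{k}^{\frac{1}{2}}\|\Bigr),
\end{equation*}
while the second inequality of Proposition \ref{Prop:Lower Bound Aux} directly supplies the bound on $\|\Delta QZ^{*}\|$. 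Multiplying these numerator bounds by the factor of $4$ absorbed from $\|(ZQZ^{*})^{-1}\|$ converts the leading $4$ into $c_{0}=16\sqrt{(2n+p)\log(9T/\delta)}$ and the $9N_{p}/4$ into $9N_{p}$, reproducing the two terms in the theorem.

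The remaining bookkeeping is the probability accounting: Propositions \ref{Prop:Lower Bound Aux}, \ref{Prop:Lower Bound True}, \ref{Prop:Upper Bound Aux}, and \ref{Prop:Upper Bound True} each hold with probability at least $1-\delta$ under the stated sample-size conditions $\min(N_{r},N_{p})\ge\max\{N_{0},N_{1}\}$, so a union bound across the four events produces the claimed $1-4\delta$ confidence. The only place where care is needed is ensuring that Proposition \ref{Prop:Lower Bound Aux}, which couples the lower bound on $\hat{Z}\hat{Q}\hat{Z}^{*}$ and the upper bound on $\|\Delta QZ^{*}\|$, is invoked as a single event so that the union bound really costs only four $\delta$'s; everything else is mechanical algebra. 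I do not anticipate a substantive obstacle---the heavy lifting has already been done in the intermediate propositions, and the main step is simply verifying that the block decompositions align and that the constants collapse to the $c_{0}$ and $9N_{p}$ stated.
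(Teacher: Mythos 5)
Your proposal is correct and follows essentially the same route as the paper's own proof: the same decomposition of the error \eqref{error_W} into $\|WQZ^{*}\|$, $\|\Delta QZ^{*}\|$, and $\|(ZQZ^{*})^{-1}\|$, the same block splittings $ZQZ^{*}=\bar{Z}\bar{Z}^{*}+\hat{Z}\hat{Q}\hat{Z}^{*}$ and $WQZ^{*}=\bar{W}\bar{Z}^{*}+\hat{W}\hat{Q}\hat{Z}^{*}$ handled by Propositions \ref{Prop:Lower Bound Aux}--\ref{Prop:Upper Bound True}, and the same four-event union bound (with the two inequalities of Proposition \ref{Prop:Lower Bound Aux} correctly treated as a single event). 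The constant bookkeeping ($4\times 4=16$ for $c_{0}$ and $\tfrac{9N_{p}}{4}\times 4=9N_{p}$) also matches.
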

\begin{proof}
Recall that the estimation error in \eqref{error_W} satisfies $\|\Theta_{WLS}-\Theta\|\le\|WQZ^{*}\|\|(ZQZ^{*})^{-1}\|+\|\Delta QZ^{*}\|\|(ZQZ^{*})^{-1}\|$. 
First, note that $ZQZ^{*}=\bar{Z}\bar{Z}^{*}+\hat{Z}\hat{Q}\hat{Z}^{*}$. Fixing $\delta>0$, letting $\min{(N_{r},N_{p}})\geq N_{0}$, combining Proposition \ref{Prop:Lower Bound True} and the lower bound in Proposition \ref{Prop:Lower Bound Aux} using a union bound, and taking the inverse, we obtain that with probability at least $1-2\delta$,
\begin{equation}\label{1st}
\|(ZQZ^{*})^{-1}\|\leq\frac{4}{\lambda_{min}(N_{r}\sum_{k=0}^{T-1}\bar{\Sigma}_{k}+N_{p}\sum_{k=0}^{T-1}q_{k}\hat{\Sigma}_{k})}.
\end{equation}

Similarly, note that $WQZ^{*}=\bar{W}\bar{Z}^{*}+\hat{W}\hat{Q}\hat{Z}^{*}$. Letting $\min{(N_{r},N_{p}})\geq N_{1}$, and combining Proposition \ref{Prop:Upper Bound True} and Proposition \ref{Prop:Upper Bound Aux} using a union bound, we obtain that with probability at least $1-2\delta$,
\begin{align}\nonumber
&\|WQZ^{*}\|\le4\sigma_{\bar{w}}\sqrt{N_{r}(2n+p)\log\frac{9T}{\delta}}\sum_{k=0}^{T-1}\|\bar{\Sigma}_{k}^{\frac{1}{2}}\|\\
&\qquad\qquad +4\sigma_{\hat{w}}\sqrt{N_{p}(2n+p)\log\frac{9T}{\delta}}\sum_{k=0}^{T-1}q_{k}\|\hat{\Sigma}_{k}^{\frac{1}{2}}\|.\label{2nd}
\end{align}

Finally, combining \eqref{1st}-\eqref{2nd} via a union bound, using the upper bound in Proposition \ref{lemma:Bound of unit variance gaussian}, and using the fact that $\bar{A}_{WLS}-\bar{A},\bar{B}_{WLS}-\bar{B}$ are submatrices of $\Theta_{WLS}-\Theta$, we get the desired result.
\end{proof}

Note that the first term in the bound on the error corresponds to the error due to the noise from the two systems, and the second term corresponds to the error due to the intrinsic model difference. Now, we state a corollary of the above theorem when the auxiliary system is time-invariant, which helps us gain more insights into the above result.

\begin{coro}\label{Bound LTI}
Suppose that the auxiliary system \eqref{eq:Perturbed system} is time-invariant, i.e., $\hat{A}_{k}=\hat{A}, \hat{B}_{k}=\hat{B}, \delta_{\Theta_{k}}=\delta_{\Theta}=\begin{bmatrix}
\delta_{A}&\delta_{B}\end{bmatrix}$ for all $k\geq 0$, where $\delta_{A}=\hat{A}-\bar{A},\delta_{B}=\hat{B}-\bar{B}$. Let $q_{k}=q$ for all $k\geq 0$ in problem \eqref{Weighted pb}. For any fixed $\delta>0$, let $\min{(N_{r},N_{p}})\geq \max\{N_{0},N_{1}\}$, where $N_{0}=8(n+p)+16\log\frac{2T}{\delta}, N_{1} =(4n+2p)\log\frac{T}{\delta}$. Then, with probability at least $1-4\delta$, the least squares solution to problem \eqref{Weighted pb} satisfies
\begin{align}\nonumber
&\max{(\|\bar{A}_{WLS}-\bar{A}\|,\|\bar{B}_{WLS}-\bar{B}\|)}\\\nonumber
\le&\underbrace{\frac{c_{0}(\sqrt{N_{r}}\sigma_{\bar{w}}c_{2}+q\sqrt{N_{p}}\sigma_{\hat{w}}c_{3})}{\lambda_{min}(N_{r}M_{1}+qN_{p}M_{2})}}_\text{Error due to noise}\\
&+\underbrace{q\|\delta_{_{\Theta}}\|\frac{N_{p}c_{1}}{\lambda_{min}(N_{r}M_{1}+qN_{p}M_{2})}}_\text{Error due to difference between true and auxiliary systems},\label{eqn:upper bound on error}
\end{align}
where
\begin{equation*}
\begin{aligned}
&c_{0}=16\sqrt{(2n+p)\log\frac{9T}{\delta}}\\
&c_{1}=9\sum_{k=0}^{T-1}\|\hat{\Sigma}_{k}\|,
\quad c_{2}=\sum_{k=0}^{T-1}\|\bar{\Sigma}_{k}^{\frac{1}{2}}\|,
\quad c_{3}=\sum_{k=0}^{T-1}\|\hat{\Sigma}_{k}^{\frac{1}{2}}\|\\
&M_{1}=\sum_{k=0}^{T-1}\bar{\Sigma}_{k}, 
\quad M_{2}=\sum_{k=0}^{T-1}\hat{\Sigma}_{k}.\\
\end{aligned}
\end{equation*}
\end{coro}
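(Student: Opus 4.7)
The plan is to derive Corollary \ref{Bound LTI} as a direct specialization of Theorem \ref{Bound Overall}, since every hypothesis of the theorem is satisfied: the LTI auxiliary system is a special case of the time-varying system \eqref{eq:Perturbed system} (with $\hat{A}_k=\hat{A}$, $\hat{B}_k=\hat{B}$, hence a single $\delta_\Theta$), the sample-size requirement $\min(N_r,N_p)\geq\max\{N_0,N_1\}$ is identical, and the confidence $1-4\delta$ is the same. So the corollary will follow once I carry out the bookkeeping that results from setting $q_k=q$ for all $k$ and $\delta_{\Theta_k}=\delta_\Theta$ for all $k$ inside the bound of Theorem \ref{Bound Overall}.

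First I would invoke Theorem \ref{Bound Overall} verbatim and then simplify the noise term. With $q_k=q$ constant, the numerator's second sum becomes $q\sqrt{N_p}\sigma_{\hat w}\sum_{k=0}^{T-1}\|\hat\Sigma_k^{1/2}\|=q\sqrt{N_p}\sigma_{\hat w}c_3$, and the first sum is exactly $\sqrt{N_r}\sigma_{\bar w}c_2$ by definition of $c_2$. Similarly, the weighted sum appearing in the denominator becomes $N_r\sum_k\bar\Sigma_k+qN_p\sum_k\hat\Sigma_k=N_rM_1+qN_pM_2$. This produces the first (noise) term in \eqref{eqn:upper bound on error}.

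Next I would simplify the bias term. Pulling the constants $q$ and $\|\delta_\Theta\|$ out of the sum, the numerator $9N_p\sum_{k=0}^{T-1}q_k\|\delta_{\Theta_k}\|\|\hat\Sigma_k\|$ collapses to $qN_p\|\delta_\Theta\|\cdot 9\sum_{k=0}^{T-1}\|\hat\Sigma_k\|=qN_p\|\delta_\Theta\|c_1$, and the denominator is the same $\lambda_{\min}(N_rM_1+qN_pM_2)$ as before. Combining the two simplified terms yields exactly the bound displayed in \eqref{eqn:upper bound on error}.

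There is no real obstacle here, since the corollary is essentially algebraic substitution; the only care needed is in verifying that constants pulled out of the $k$-indexed sums agree with the definitions of $c_1,c_2,c_3,M_1,M_2$ and that the union-bound accounting in Theorem \ref{Bound Overall} (yielding probability at least $1-4\delta$) is unaffected by the specialization. Thus the proof can be written compactly: invoke Theorem \ref{Bound Overall}, substitute $q_k=q$ and $\delta_{\Theta_k}=\delta_\Theta$, and collect terms using the defined constants.
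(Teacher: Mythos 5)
Your proposal is correct and matches the paper's (implicit) derivation exactly: the paper states Corollary~\ref{Bound LTI} without a separate proof precisely because it is the specialization of Theorem~\ref{Bound Overall} obtained by setting $q_{k}=q$ and $\delta_{\Theta_{k}}=\delta_{\Theta}$ and collecting the resulting sums into the constants $c_{1},c_{2},c_{3},M_{1},M_{2}$. Your bookkeeping of the numerators, the denominator $\lambda_{min}(N_{r}M_{1}+qN_{p}M_{2})$, and the unchanged probability $1-4\delta$ is all accurate.
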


\begin{remark} \textbf{Interpretation of Corollary \ref{Bound LTI} and the selection of weight parameter $q$.}
Recall that $N_{r}$ is the number of rollouts from the true system \eqref{eq:True system}, and $N_{p}$ is the number of rollouts from the auxiliary system \eqref{eq:Perturbed system}. From the upper bound in \eqref{eqn:upper bound on error}, one can observe that when $q$ is fixed, increasing $N_{p}$ reduces the error due to the noise, at the cost of adding a bias that is due to the intrinsic model difference $\delta_{\Theta}$. Choosing the optimal weight $q$ in practice requires an oracle (to know the specific values of the different parameters in \eqref{eqn:upper bound on error}), and one would leverage a cross-validation process to select a good $q$ (see \cite{refaeilzadeh2009cross} for an overview). However, general guidelines can be given based on the upper bound provided by Corollary \ref{Bound LTI} when $N_{p}$ is large: 
\begin{itemize}
    \item When $N_{r}$ is small, we can increase $q$ to reduce the first term in the error bound (such that the error bound is dominated by the second term). This corresponds to the case where we have little data from the true system, and thus there may be a large identification error due to using only that data. In this case, it is worth placing more weight on the data from the auxiliary system, up to the point that the reduction in estimation error due to the larger amount of data is balanced out by the intrinsic differences between the systems.
    \item When $N_{r}$ is large, we can  decrease $q$ to reduce the second term as well, since the first term is already small enough.  This corresponds to the case where we have a large amount of data from the true system, and only need the data from the auxiliary system to slightly improve our estimates (due to the presence of additional data).  In this case, we place a lower weight on the auxiliary data in order to avoid excessive bias due to the difference in dynamics.
    \end{itemize}
The above insights align well with intuition, and are supported by the mathematical analysis provided in this section.  We also illustrate these ideas experimentally in Section \ref{exp}.
\end{remark}

Finally, the following result provides a (sufficient) condition under which using the data from the auxiliary system leads to a smaller error bound (compared to using data only from the true system).

\begin{coro}\label{When to use}
Consider the upper bound provided in Corollary \ref{Bound LTI}. Setting $q$ to be non-zero will result in a smaller upper bound if
\begin{equation}
\begin{aligned}
\frac{\sigma_{\bar{w}}c_{2}}{\sqrt{N_{r}}\lambda_{min}(M_{1})} > \frac{\sigma_{\hat{w}}c_{3}}{\sqrt{N_{p}}\lambda_{min}(M_{2})}+\frac{\|\delta_{\Theta}\|c_{1}}{\lambda_{min}(M_{2})c_{0}}.
\label{eq:condition_to_use_aux_data}
\end{aligned}
\end{equation}
\end{coro}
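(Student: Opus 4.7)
The plan is to view the upper bound in Corollary~\ref{Bound LTI} as a one-variable function of the weight $q$ and to show that, under the hypothesis \eqref{eq:condition_to_use_aux_data}, its value at any $q>0$ is strictly less than its value at $q=0$. Denote this upper bound by $f(q)$. The only non-affine ingredient of $f$ is the $\lambda_{\min}(N_{r}M_{1}+qN_{p}M_{2})$ term in the denominator, which is concave in $q$ but not linear, so the first step is to bypass it using the standard Weyl inequality for positive semidefinite matrices,
\[
\lambda_{\min}(N_{r}M_{1}+qN_{p}M_{2}) \;\ge\; N_{r}\lambda_{\min}(M_{1}) + qN_{p}\lambda_{\min}(M_{2}).
\]
This produces the looser upper bound
\[
\tilde{f}(q) \;=\; \frac{c_{0}\sqrt{N_{r}}\sigma_{\bar{w}}c_{2} + q\bigl(c_{0}\sqrt{N_{p}}\sigma_{\hat{w}}c_{3} + N_{p}\|\delta_{\Theta}\|c_{1}\bigr)}{N_{r}\lambda_{\min}(M_{1}) + qN_{p}\lambda_{\min}(M_{2})},
\]
whose key property is that the Weyl inequality above is an equality at $q=0$, so $\tilde{f}(0)=f(0)$. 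It therefore suffices to prove $\tilde{f}(q)<\tilde{f}(0)$ for every $q>0$.

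Next, $\tilde{f}$ has the linear-fractional form $(A+qB)/(C+qD)$ with strictly positive constants
\[
A = c_{0}\sqrt{N_{r}}\sigma_{\bar{w}}c_{2}, \quad B = c_{0}\sqrt{N_{p}}\sigma_{\hat{w}}c_{3} + N_{p}\|\delta_{\Theta}\|c_{1},
\]
\[
C = N_{r}\lambda_{\min}(M_{1}), \quad D = N_{p}\lambda_{\min}(M_{2}),
\]
where $C,D>0$ thanks to $\lambda_{\min}(\bar{\Sigma}_{k})>0$ and $\lambda_{\min}(\hat{\Sigma}_{k})>0$ (noted after \eqref{covariance} and \eqref{covariance_p}). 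A single differentiation gives $\tilde{f}'(q)=(BC-AD)/(C+qD)^{2}$, so $\tilde{f}$ is strictly decreasing on $[0,\infty)$ if and only if $A/C>B/D$. Substituting the expressions for $A,B,C,D$ and cancelling the common factor $c_{0}$ reduces this inequality to
\[
\frac{\sigma_{\bar{w}}c_{2}}{\sqrt{N_{r}}\lambda_{\min}(M_{1})} \;>\; \frac{\sigma_{\hat{w}}c_{3}}{\sqrt{N_{p}}\lambda_{\min}(M_{2})} + \frac{\|\delta_{\Theta}\|c_{1}}{c_{0}\lambda_{\min}(M_{2})},
\]
which is exactly the assumed \eqref{eq:condition_to_use_aux_data}. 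Chaining the inequalities yields $f(q)\le\tilde{f}(q)<\tilde{f}(0)=f(0)$ for every $q>0$, as desired.

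The main (and essentially only) obstacle is the $\lambda_{\min}$ of a matrix sum in the denominator; once Weyl's inequality converts this to an affine function of $q$, what remains is a routine monotonicity check for a one-dimensional linear-fractional map. If one wished to sharpen the result to identify an optimal weight rather than a sufficient condition for improvement, the argument would have to keep $\lambda_{\min}(N_{r}M_{1}+qN_{p}M_{2})$ intact, which is substantially more delicate; the present corollary sidesteps this by settling for a sufficient condition in which the looser bound $\tilde{f}$ already suffices.
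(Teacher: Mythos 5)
Your proposal is correct and follows essentially the same route as the paper: both replace $\lambda_{\min}(N_{r}M_{1}+qN_{p}M_{2})$ by the affine lower bound $N_{r}\lambda_{\min}(M_{1})+qN_{p}\lambda_{\min}(M_{2})$ via Weyl's inequality and then compare the resulting linear-fractional expression against the $q=0$ bound, which reduces to exactly condition \eqref{eq:condition_to_use_aux_data}. Your write-up is merely more explicit than the paper's (noting that the Weyl bound is tight at $q=0$ and phrasing the comparison as a monotonicity check of $(A+qB)/(C+qD)$), but the decomposition, key lemma, and final algebra are identical.
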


\begin{proof}
When $q=0$, from Corollary \ref{Bound LTI}, we have
\begin{equation*}
\begin{aligned}
\max{(\|\bar{A}_{WLS}-\bar{A}\|,\|\bar{B}_{WLS}-\bar{B}\|)} \le \frac{c_{0}\sqrt{N_{r}}\sigma_{\bar{w}}c_{2}}{N_{r}\lambda_{min}(M_{1})}.
\end{aligned}
\end{equation*}
Similarly, when $q\neq 0$, from Corollary \ref{Bound LTI}, we have
\begin{multline*}
\max{(\|\bar{A}_{WLS}-\bar{A}\|,\|\bar{B}_{WLS}-\bar{B}\|)}\\
\le \frac{c_{0}(\sqrt{N_{r}}\sigma_{\bar{w}}c_{2}+q\sqrt{N_{p}}\sigma_{\hat{w}}c_{3})}{\lambda_{min}(N_{r}M_{1}+qN_{p}M_{2})}\\
 +q\|\delta_{_{\Theta}}\|\frac{N_{p}c_{1}}{\lambda_{min}(N_{r}M_{1}+qN_{p}M_{2})}.
\end{multline*}

From Weyl's inequality \cite{horn2012matrix}, we have
\begin{equation*}
\begin{aligned}
\lambda_{min}(N_{r}M_{1}+qN_{p}M_{2})\geq N_{r}\lambda_{min}(M_{1})+qN_{p}\lambda_{min}(M_{2}).
\end{aligned}
\end{equation*}
The proof now follows by setting
\begin{equation*}
\begin{aligned}
\frac{c_{0}\sqrt{N_{r}}\sigma_{\bar{w}}c_{2}}{N_{r}\lambda_{min}(M_{1})}&>\frac{c_{0}(\sqrt{N_{r}}\sigma_{\bar{w}}c_{2}+q\sqrt{N_{p}}\sigma_{\hat{w}}c_{3})}{N_{r}\lambda_{min}(M_{1})+qN_{p}\lambda_{min}(M_{2})}\\
&+q\|\delta_{_{\Theta}}\|\frac{N_{p}c_{1}}{N_{r}\lambda_{min}(M_{1})+qN_{p}\lambda_{min}(M_{2})}.
\end{aligned}
\end{equation*}
\end{proof}

\begin{remark} \textbf{Interpretation of Corollary \ref{When to use}.}  We note that the above sufficient condition \eqref{eq:condition_to_use_aux_data} could be conservative, and may not be directly checked in practice due to the unknown parameters. However, we describe the insights provided by this condition. In general, condition \eqref{eq:condition_to_use_aux_data} is more likely to be satisfied if $\sigma_{\hat{w}}$ is small (the auxiliary system is less noisy),  $\|\delta_{\Theta}\|$ is small (the true system and auxiliary system are very similar), or $N_{p}$ is large (one has a lot of samples from the auxiliary system), as this would make the right hand side of \eqref{eq:condition_to_use_aux_data} smaller.  In such cases, the auxiliary samples tend to be more informative. In contrast, condition \eqref{eq:condition_to_use_aux_data} is less likely to be satisfied if $\sigma_{\bar{w}}$ is small and $N_{r}$ is large, i.e., when the data from the true system is not too noisy and one has a lot of samples from that system, the auxiliary samples tend to be less informative. 
\end{remark}

\section{Numerical Experiments to Illustrate Various Scenarios for System Identification from Auxiliary Data} \label{exp}
We now provide some numerical examples of the weighted least squares-based system identification algorithm (Algorithm~\ref{BatchAlgorithm}), when the auxiliary system is LTI. The experiments are performed using the following system matrices:
\begin{align*}
\bar{A}=
\begin{bmatrix}
0.6&0.5&0.4\\
0&0.4&0.3\\
0&0&0.3\\
\end{bmatrix},
\indent 
\bar{B}=
\begin{bmatrix}
1&0.5\\
0.5&1\\
0.5&0.5\\
\end{bmatrix},
\end{align*}
\begin{align*}
\hat{A}=
\begin{bmatrix}
0.7&0.5&0.4\\
0&0.4&0.3\\
0&0&0.3\\
\end{bmatrix},
\indent 
\hat{B}=
\begin{bmatrix}
1.1&0.5\\
0.5&1\\
0.5&0.5\\
\end{bmatrix},
\end{align*}
and $\sigma_{\bar{x}}^{2},\sigma_{\hat{x}}^{2}$, $\sigma_{\bar{u}}^{2},\sigma_{\hat{u}}^{2}$ $\sigma_{\bar{w}}^{2}, \sigma_{\hat{w}}^{2}$ are set to be $1$. The rollout length is set to be $T=2$ for all experiments in the sequel. All results are averaged over $10$ independent runs.

\begin{figure*}
\minipage[t]{0.32\textwidth} 
    \includegraphics[width=\linewidth]{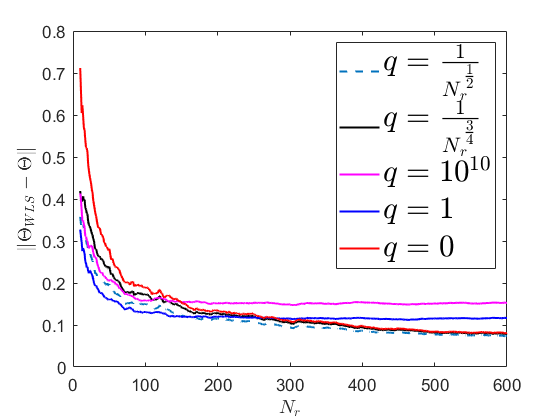}
    \caption{Scenario 1: Both $N_r$ and $N_p$ increase over time ($N_p = 3N_r$)}
    \label{fig:sce1} 
\endminipage \hfill
\minipage[t]{0.32\textwidth}
    \includegraphics[width=\linewidth]{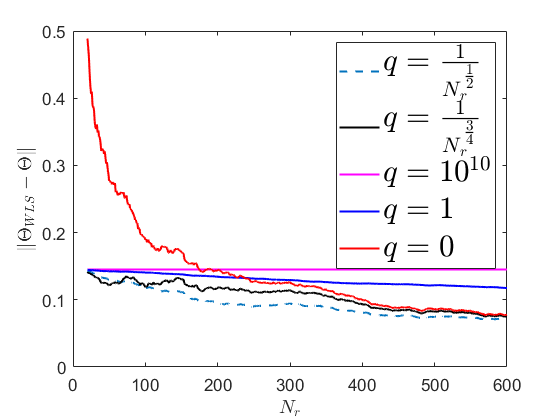}
    \caption{Scenario 2: $N_p$ is fixed, and $N_r$ increases over time}
    \label{fig:sce2} 
\endminipage \hfill
\minipage[t]{0.32\textwidth}
    \includegraphics[width=\linewidth]{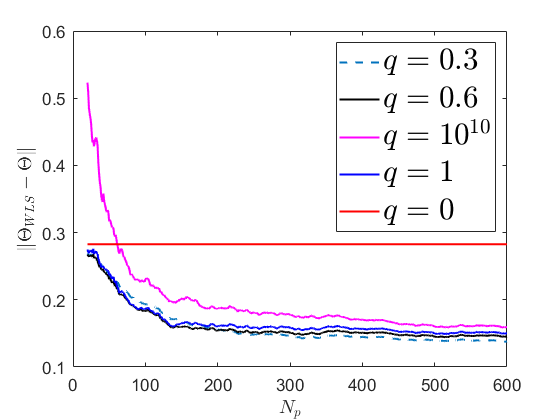}
    \caption{Scenario 3: $N_r$ is fixed, and $N_p$ increases over time}
    \label{fig:sce3} 
\endminipage
\end{figure*}
\subsection{Scenario 1: Both $N_{r}$ and $N_{p}$ are increasing}

For the first experiment, the number of rollouts from the auxiliary system is set to be $N_{p}=3N_{r}$. In practice, one may encounter such a scenario when running experiments to gather data from the true system is time consuming or costly, whereas gathering data from an auxiliary system (such as a simulator) is easier or cheaper.  

In Fig.~\ref{fig:sce1}, we plot the estimation error versus $N_{r}$ for different weight parameters $q$. As expected, setting $q>0$ leads to a smaller estimation error of system matrices when one does not have enough data from the true system (small $N_{r}$). However, the curve for $q=1$ and $q=10^{10}$ (corresponding to treating all samples equally and paying almost no attention to the samples from the true system, respectively) eventually plateau and incur more error than not using the the auxiliary data ($q=0$). This phenomenon matches the theoretical guarantee in Corollary \ref{Bound LTI}. Specifically, when $q$ is a nonzero constant, the upper bound in \eqref{eqn:upper bound on error} will not go to zero as $N_{r}$ increases; furthermore, since both $N_p$ and $N_r$ are increasing in a linear relationship,  there is no need to attach high importance to the auxiliary data when one has enough data from the true system. In contrast, setting $q$ to be diminishing with $N_{r}$ could perform consistently better than $q=0$, even when $N_{r}$ is large. Indeed, one can choose $q=\mathcal{O}(\frac{1}{\sqrt{N_{r}}})$ in the upper bound given by \eqref{eqn:upper bound on error} in~Corollary \ref{Bound LTI}, and show that the upper bound becomes $\mathcal{O}(\frac{1}{\sqrt{N_{r}}})$. Thus, the estimation error tends to zero as $N_r$ increases to infinity.  

{\bf Key Takeaway:} When $N_p$ and $N_r$ are both increasing linearly, having $q$ diminish with $N_r$ helps to reduce the system identification error when $N_r$ is small (by leveraging data from the auxiliary system), and avoids excessive bias from the auxiliary system when $N_r$ is large.

\subsection{Scenario 2: $N_{p}$ is fixed but $N_{r}$ is increasing}
In the second experiment, we fix the number of rollouts from the auxiliary system to be $N_{p}=2400$, and study what happens as the number of rollouts from the true system increases.  One may encounter such a scenario when the system dynamics change due to faults.  In this case, the true system is the one after the fault, and the auxiliary system is the one prior to the fault.  Consequently, the old data from the system (corresponding to the auxiliary system) may not accurately represent the new (true) system dynamics. While one can collect data from the new system dynamics, leveraging the old data might be beneficial in this case.

In Fig.~\ref{fig:sce2}, we plot the estimation error versus $N_{r}$ for different weight parameters $q$. As expected, setting $q>0$ leads to a smaller error during the initial phase when $N_{r}$ is small. This can be confirmed by Corollary \ref{Bound LTI} since the error is essentially the error due to the model difference.  Namely, the auxiliary data helps to build a good initial estimate. When we set the weight to be $q=10^{10}$, we are paying little attention to the samples from the true system, i.e., we are not gaining any new information as we collect more data from the true system. Consequently, the error is almost not changed as $N_{r}$ increases when $q=10^{10}$. As can be observed from Corollary \ref{Bound LTI}, when $N_{p}$ is fixed, consistency can always be achieved as $N_{r}$ increases, using the weights we selected in the experiment. However, when $q$ is set to be too large, it could make the error even larger due to the model difference (or bias) introduced by the auxiliary system. This is captured by Corollary \ref{Bound LTI} since when $q$ is set to be large (such that $qN_{p}$ is large compared to $N_{r}$), even when $N_{r}$ gets large, the upper bound in \eqref{eqn:upper bound on error} is still large due to the effect from the second term in the bound (capturing model difference). 

{\bf Key Takeaway:} When $N_p$ is fixed and large, and $N_{r}$ increases over time, setting $q$ to be nonzero builds a good initial estimate for the true system dynamics when $N_{r}$ is small. Again, having $q$ diminish with $N_r$ helps to reduce the system identification error when $N_r$ is small, and avoids excessive bias from the auxiliary system when $N_r$ is large.

\subsection{Scenario 3: $N_{r}$ is fixed but $N_{p}$ is increasing}
In the last experiment, we fix the number of rollouts from the true system to be $N_{r}=50$.  As discussed earlier, one may encounter such a scenario when  one has only a limited amount of time to gather data from true system. As a result, leveraging information from other systems (e.g., from a reasonably accurate simulator) could be helpful to augment the data.  This is the most subtle case, since $N_{r}$ is fixed and there is no way to guarantee consistency from Corollary \ref{Bound LTI}.

In Fig.~\ref{fig:sce3}, we plot the the estimation error versus $N_{p}$ for different weight parameters $q$.  As it can be seen, setting $q=0$ (not using the auxiliary samples) gives a flat line, which is the error we can achieve purely based on $N_{r}=50$ rollouts from the true system. When $q=10^{10}$, we are essentially identifying the auxiliary system without caring about the true system. In contrast, the results for $q=1, 0.6, 0.3$ suggest that setting a relatively balanced weight makes the error smaller than the two extreme cases ($q=0, 10^{10}$). However, in practice, one may want to utilize a cross-validation process to select a good $q$, when there is not enough prior knowledge about the true system and the auxiliary system. 

{\bf Key Takeaway:} Although consistency cannot be guaranteed when $N_r$ is fixed and small, and $N_{p}$ increases over time, a relatively balanced $q$ could make the error smaller than the extreme cases ($q=0, 10^{10}$).

\section{Conclusion and future work} \label{sec: conclusion}

In this paper, we provided a finite sample analysis of the weighted least squares approach to LTI system identification, when we have access to samples from an auxiliary system that shares similar dynamics. We showed that one can leverage the auxiliary data generated by the similar system to reduce the estimation error due to the noise, at the cost of adding a portion of error that is due to intrinsic differences in the models of the true and auxiliary systems. One limitation of our result is that our bound cannot capture the empirical trend that a longer length of rollout $T$ (when we do not use samples from the auxiliary system) reduces the estimation error. One future direction is to develop a tighter bound by leveraging results from the single trajectory setup. In addition, it is also of interest to consider systems with special structures, such as sparsity.




\bibliographystyle{IEEEtran}
\bibliography{main}
\end{document}